\pgfplotsset{width=8cm,compat=1.18}
\newtheorem*{theorem*}{Theorem}
\newtheorem*{prop*}{Proposition}
\theoremstyle{definition}
\theoremstyle{remark}
\newcommand{\bb}{\begin{equation}}
\newcommand{\ee}{\end{equation}}
\newcommand{\id}{{\textnormal{id}}}
\newcommand{\tr}{{\textnormal{tr}}}
\newcommand{\cS}{\mathcal{S}}
\newcommand{\cB}{\mathcal{B}}
\DeclareMathOperator{\Tr}{tr}
\begin{document}

\title{Fundamental Limit on the Power of Entanglement Assistance in Quantum Communication}

\author{Lasse H. Wolff}
\email{lhw@math.ku.dk}
\affiliation{Department of Mathematical Sciences, University of Copenhagen, Universitetsparken 5, 2100 Denmark}

\author{Paula Belzig}
\email{pbelzig@uwaterloo.ca}
\affiliation{Institute for Quantum Computing, University of Waterloo, 200 University Avenue West, 
Waterloo, Ontario N2L3G1, Canada}

\author{Matthias Christandl}
\email{christandl@math.ku.dk}
\affiliation{Department of Mathematical Sciences, University of Copenhagen, Universitetsparken 5, 2100 Denmark}

\author{Bergfinnur Durhuus}
\email{durhuus@math.ku.dk}
\affiliation{Department of Mathematical Sciences, University of Copenhagen, Universitetsparken 5, 2100 Denmark}

\author{Marco Tomamichel}
\email{marco.tomamichel@nus.edu.sg}
\affiliation{Department of Electrical and Computer Engineering, National University of Singapore, 4 Engineering Drive 3, 117583, Singapore  and\\
Centre for Quantum Technologies, 3 Science Drive 2, 117543, Singapore}

\date{\today}

\begin{abstract}
The optimal rate of reliable communication over a quantum channel can be enhanced by pre-shared entanglement. Whereas the enhancement may be unbounded in infinite-dimensional settings even when the input power is constrained, a long-standing conjecture asserts that the ratio between the entanglement-assisted and unassisted classical capacities is bounded in finite-dimensional settings [Bennett \emph{et al}., \emph{IEEE Trans. Inf. Theory} {\bf 48}, 2637 (2002)]. In this work, we prove this conjecture by showing that their ratio is upper bounded by $o(d^2)$, where $d$ is the input dimension of the channel. An application to quantum communication with noisy encoders and decoders is given.
\end{abstract}

\keywords{Quantum channel, quantum communication, entanglement, capacity}

\maketitle

\section{Introduction} \label{section:introduction}

The fact that pre-shared quantum entanglement between a sender and a receiver can lead to a boost in the amount of information that can be communicated over a quantum channel has long been a celebrated and widely studied phenomenon, which highlights the unique and novel possibilities that quantum phenomena inherently present for the development of new technology. The textbook example of this is the super-dense coding protocol \cite{Bennett_superdensecoding}, where in a noiseless setting, pre-shared entanglement allows a sender to communicate two bits of classical information by only transmitting a single qubit to the receiver, a fact that has been demonstrated experimentally \cite{MWKZ96}.

It turns out that, when communicating over a noisy channel, the advantage of using entanglement can become much more potent than for a noiseless channel, with even further practical implications \cite{Hao_practicalEACOMM}. For example, for a $d$-dimensional depolarising channel, the ratio between its \emph{entanglement-assisted classical capacity} $C_{\text{E}}$ and its (ordinary) \emph{classical capacity} $C$ approaches $d+1$ \cite{Bennet_1999_RatioCalculationDepolarising} as the depolarizing probability approaches $1$, i.e., as the channel approaches the fully depolarizing channel (see Fig.~\ref{figure:plot capacities depolarising channels}). This means that in the mentioned limit, the asymptotic rate at which classical bits can be reliably communicated over the channel with the help of quantum encoders and decoders will be $d+1$ times as large if the sender and receiver have access to an unlimited amount of pre-shared entangled qubits. Thus, entanglement can grant an arbitrarily large boost to the classical communication rate of a sufficiently noisy channel as the input and output space dimension grow to infinity---a phenomenon with applications in e.g. deep-space communication \cite{Banaszek_Qlimits, BELENCHIA20221, Nafria_23} and covert communication \cite{zlotnick2023, bash_quantum-secure_2015, Tahmasbi2020c}.

\begin{figure}[ht]

\centering

\includegraphics[width=8.6cm]{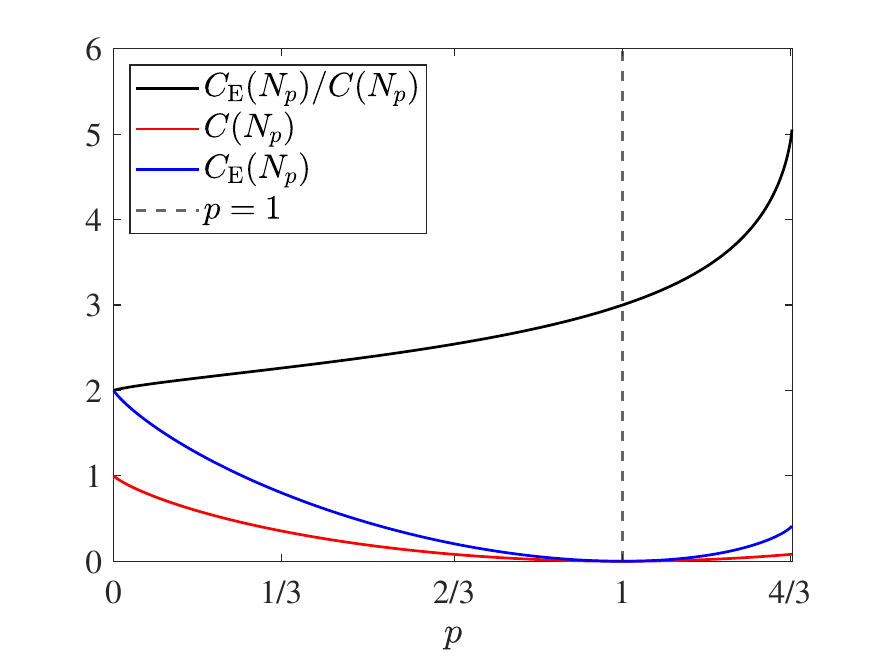}

\caption{The entanglement-assisted classical capacity $C_{\text{E}}(N_p)$, the classical capacity $C(N_p)$ and their ratio plotted as a function of $p$ for the qubit depolarising channel $N_p(\rho)  =(1-p) \rho + p \Tr(\rho) \frac{1}{2} \text{I}_2$, which is a quantum channel for $p \in \left[0,\frac{4}{3} \right]$. $C_{\text{E}}(N_p)$ and $C(N_p)$ are calculated analytically in \cite{Holevo_conjectureChannel} (see also 
 \cite[Theorem 1]{King_depolarising}). Note that $C_{\text{E}}(N_1)=C(N_1)=0$, where their ratio converges to $3$.} \label{figure:plot capacities depolarising channels}
\end{figure}

For other families of quantum channels, it has similarly been observed that the ratio $C_{\text{E}}/C$ is finite and bounded for finite input and output dimensions (see \cite{Bennet_1999_RatioCalculationDepolarising} for the erasure channel, \cite{Bennet_etAl_entanglementassisted_descript} for the amplitude damping channel) and can become arbitrarily large as the input and output space dimensions are unbounded (see \cite{Shi_2020} for the lossy bosonic channel with power constraints). 
It is therefore natural to ask whether it is possible for this ratio to diverge for families of channels with fixed input and output dimension, or, conversely, whether the ratio is bounded by a function of input and output dimension for all quantum channels. This question was, indeed asked in one of the earliest papers on entanglement assistance \cite{Bennet_etAl_entanglementassisted_descript}. In the present paper, we show that the latter alternative occurs by proving that the ratio $C_{\text{E}}/C$ is upper bounded by $O(d_{A}^2 / \ln(d_{A}))$, where $d_{A}$ is the input dimension of the channel. This is the content of the theorem below presented in Section \ref{section:presenting the bounds} and proven in Section \ref{section:proving the bounds}. In Section \ref{section:applications}, we discuss an application of our bound to fault-tolerant quantum communication.

\section{A Bound on \texorpdfstring{$C_{\text{E}}/C$}{CEoverC}} \label{section:presenting the bounds}

We consider Hilbert spaces $A$ and $B$ of finite dimension $d_A$ and $d_B$, respectively, whose unit vectors represent the pure states of quantum systems with the same label, and, as usual, we set $AB = A\otimes B$, etc. We shall denote the set of (bounded) linear maps on $A$ by $\cB(A)$ and by $\text{I}_A$ the identity operator on $A$. We let $\cS(A)$ denote the compact convex set of \emph{quantum states} (or density matrices), i.e., positive semi-definite elements of $\cB(A)$ with trace $1$. The set of pure states $\cS_*(A)$ consists of the extremal points of $\cS(A)$, being of the form $\rho_A = |\psi\rangle\langle\psi|$, where $|\psi\rangle$ is a unit vector in $A$.  A \emph{quantum channel} is a linear, completely positive, and  trace preserving map $T: \cB(A) \to \cB(B)$. By $\id_A$ we denote the identity channel on  $\cB(A)$.

Given a quantum channel $T:\mathcal{B}(A) \rightarrow \mathcal{B}(B)$ the classical capacity $C(T)$ is defined as the maximal asymptotic rate (in bits per channel use) at which classical information can be transmitted reliably (with asymptotically vanishing probability of error) through the channel with the help of a quantum encoder and decoder \cite{Watrous_QIT, Wildebook}. In the absence of a known single-letter formula for $C(T)$, it is useful for our purposes that
\begin{equation}\label{C-CH}
C(T) \geq C_{\text{H}}(T)\,,
\end{equation} 
as a consequence of the Holevo--Schumacher--West\-more\-land Theorem \cite{schumacher_sending_1997, holevo_capacity_1998}, where $C_{\text{H}}(T)$ denotes the \emph{Holevo quantity}, which is defined below in \eqref{eq:holevo capacity expression}. 

 The entanglement-assisted classical capacity, $C_{\text{E}}(T)$, is defined as the maximum asymptotic rate of such reliable information transmission with the help of an unlimited supply of pre-shared entanglement between sender and receiver. Contrary to $C(T)$, this quantity is known to be expressible in a concise way in terms of relative entropies as shown in \cite{Bennet_etAl_entanglementassisted_descript}, see \eqref{CEdef} below (see \cite[Chapter 8]{Watrous_QIT} for a thorough discussion of various types of capacities of quantum channels). 

Our main result is the following bound on the ratio $C_{\text{E}}(T)/C(T)$.

\begin{theorem*}  \label{main}
Let $T:\mathcal{B}(A) \rightarrow \mathcal{B}(B)$ be an arbitrary quantum channel. Then,
\begin{align*} 
C_{\text{E}} (T) \ \leq \ \frac {(4d_A -3)(2d_A -\frac 52)^2}{(2d_A -\frac 32)\ln(2d_A-\frac 32) -2d_A +\frac 52}  \  C(T)\,. 
\end{align*}
\end{theorem*}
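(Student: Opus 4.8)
The plan is to prove the stronger statement $C_{\text{E}}(T)\le f(d_A)\,C_{\text{H}}(T)$, where $f$ is the displayed function; the theorem then follows from \eqref{C-CH}. Constant channels ($T(\rho)=\sigma$ for all $\rho$) have $C_{\text{E}}=C_{\text{H}}=0$, so we may assume $T$ is non‑constant. I would use the coherent‑information form $C_{\text{E}}(T)=\max_\rho\big(S(\rho)+S(T(\rho))-S(\tilde T(\rho))\big)$, where $\tilde T$ denotes a complementary channel of $T$, together with the Holevo‑quantity expression for $C_{\text{H}}$.

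One elementary ingredient handles channels with large $C_{\text{E}}$. Testing $C_{\text{H}}$ on the spectral ensemble $\{\lambda_i,|i\rangle\langle i|\}$ of any state $\rho$ and using that a pure input produces isospectral outputs under $T$ and $\tilde T$, i.e. $\sum_i\lambda_iS(T(|i\rangle\langle i|))=\sum_i\lambda_iS(\tilde T(|i\rangle\langle i|))\le S(\tilde T(\rho))$ by concavity, gives $C_{\text{H}}(T)\ge S(T(\rho))-S(\tilde T(\rho))$. Evaluating at the $C_{\text{E}}$‑optimiser $\rho^{\ast}$ yields $C_{\text{H}}(T)\ge C_{\text{E}}(T)-S(\rho^{\ast})\ge C_{\text{E}}(T)-\log_2 d_A$. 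Hence, whenever $C_{\text{E}}(T)$ is at least a suitable constant multiple of its maximal value $2\log_2 d_A$ (concretely $C_{\text{E}}(T)(1-1/f(d_A))\ge\log_2 d_A$), one already gets $C_{\text{H}}(T)\ge C_{\text{E}}(T)/f(d_A)$. The substantive case is therefore that of channels with $C_{\text{E}}(T)$ small, equivalently those close to the set of constant channels.

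For that regime the plan is to quantify how both capacities vanish as $T$ approaches the constant channels and to show their rates stay within a factor $f(d_A)$. Concretely, one upper bounds $C_{\text{E}}(T)$ and lower bounds $C_{\text{H}}(T)$ in terms of a common measure of non‑constancy — e.g. $\gamma(T):=\max_{\rho_0,\rho_1}\|T(\rho_0)-T(\rho_1)\|_1$, or the diamond distance $\min_\Phi\|T-\Phi\|_{\diamond}$ to the constant channels — and checks that the ratio of the resulting bounds never exceeds $f(d_A)$. For the $C_{\text{H}}$ lower bound the natural move is to build a tailored test ensemble: start from a pair of pure inputs whose outputs attain $\gamma(T)$, complete it to an ensemble of about $2d_A$ states with a convenient (e.g. maximally mixed) average input, and estimate its Holevo quantity from below using convexity of $t\mapsto t\ln t$. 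A careful bookkeeping of the output eigenvalues in this estimate is what should produce the term $(2d_A-\tfrac32)\ln(2d_A-\tfrac32)-2d_A+\tfrac52=\int_1^{2d_A-3/2}\ln t\,\mathrm{d}t$ in the denominator; matching it against the $C_{\text{E}}$ estimate gives exactly the function $f(d_A)$. One then patches the two regimes, choosing the crossover so that both arguments apply.

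The hard part is the near‑constant regime. The obstacle is sharpness: the textbook continuity estimates for von Neumann entropy and for mutual information (Fannes--Audenaert, Alicki--Fannes--Winter) are too lossy here — a straightforward application would forfeit an extra factor of order $d_A$ and yield only an $O(d_A^2)$ bound rather than the claimed $o(d_A^2)$ — so one must instead capture the true (higher‑order) rate at which $C_{\text{E}}$ and $C_{\text{H}}$ degenerate near a constant channel, with the correct dependence on $d_A$; this is the technical heart of the proof. A secondary nuisance is that $\rho^{\ast}$ need not be full‑rank, so any relative‑entropy estimate using it as the reference must be treated with care, e.g. by arguing throughout with the coherent‑information form of $C_{\text{E}}$ or by smoothing $\rho^{\ast}$ slightly.
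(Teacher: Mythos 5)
There is a genuine gap: the proposal proves only the easy half of the statement and leaves the substantive half as a plan. Your first ingredient, $C_{\text{H}}(T)\ge S(T(\rho))-S(\tilde T(\rho))$ via the spectral ensemble and isospectrality of complementary outputs on pure inputs, is correct and gives $C_{\text{H}}(T)\ge C_{\text{E}}(T)-\log_2 d_A$; but this only yields the claimed ratio when $C_{\text{E}}(T)\gtrsim \log_2 d_A$, i.e.\ for channels whose entanglement-assisted capacity is already within a constant factor of maximal. The regime where the ratio is actually in danger of diverging is the near-constant (very noisy) one, and there you offer a strategy rather than an argument. Worse, the specific strategy---upper-bound $C_{\text{E}}$ and lower-bound $C_{\text{H}}$ through a single scalar non-constancy measure $\gamma(T)$---is unlikely to close: Pinsker-type arguments give $C_{\text{H}}\gtrsim \gamma^2$, while the best upper bound on $C_{\text{E}}$ in terms of a trace-norm distance to the constant channels (an Alicki--Fannes--Winter-type continuity bound) behaves like $\gamma\log(1/\gamma)$, so the ratio of these two one-parameter bounds blows up as $\gamma\to 0$. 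You correctly diagnose that one must ``capture the true higher-order rate,'' but supplying the mechanism for that is precisely the missing content, not a secondary detail.

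The paper does not split into regimes at all. It proves a single pointwise inequality: for every pure $\rho_{A'A}$ it bounds $D\left((\id_{A'}\otimes T)\rho_{A'A}\,\middle\|\,\rho_{A'}\otimes\sigma_B\right)$ by a multiple of $\sup_{\omega}D\left(T(\omega)\middle\|\tau_B\right)$. The engine is a two-sided comparison of relative entropy with the quadratic form $K_\tau(\rho-\tau)=\int_0^\infty dx\,\tr\bigl[(\rho-\tau)(\tau+x\text{I})^{-1}\bigr]^2$, namely $g(k)K_\tau(\rho-\tau)\le D\left(\rho\middle\|\tau\right)\le K_\tau(\rho-\tau)$ whenever $\rho\le k\tau$, with $g(k)=\frac{k\ln k-k+1}{(k-1)^2}$. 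The upper bound converts the bipartite mutual information into a $K$-form, a Schmidt decomposition reduces that to single-system $K$-forms evaluated on $O(d_A^2)$ pure inputs, Donald's identity selects $\sigma_B$ as the corresponding output average, the domination $T(\text{pure})\le(2d_A-\tfrac32)\sigma_B$ supplies $k=2d_A-\tfrac32$, and the lower bound converts the $K$-forms back into relative entropies controlled by $C_{\text{H}}$. Your guess that the denominator is $\int_1^{2d_A-3/2}\ln t\,dt$ is numerically right (it equals $k\ln k-k+1$) but it arises from $g(k)$, not from eigenvalue bookkeeping of a test ensemble. To repair your proof you would need a substitute for this quadratic-form sandwich---some sharp, dimension-explicit two-sided control of the relevant relative entropies near the constant channels---and nothing in the proposal provides it.
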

For large $d_A$, the prefactor on the right-hand side scales as $\frac{8 d_A^2}{\ln(d_A)}(1+o(1))$. From the explicit capacity expressions from \cite{Holevo_conjectureChannel}, we can obtain an achievable prefactor which scales as $2 d_A (1+o(1))$ for large $d_A$ and any uniform upper bound on the ratio must be lower-bounded by that expression.

\section{Proof of the main result} \label{section:proving the bounds}

\subsection{Definitions} \label{subsec:capacity expressions}

The relative entropy of two states $\rho, \tau$ on the same space is defined as 
\begin{align} \label{eq:def relative entropy}
D \left( \rho \middle\| \tau \right) = \begin{cases} \Tr \left[ \rho\ln \rho - \rho\ln \tau \right] \;\; \mbox{if ${\rm ker}\,\tau\subseteq {\rm ker}\,\rho$}\\ \infty\;\qquad\mbox{\rm otherwise}\end{cases}\,.
\end{align}

We use the natural logarithm (instead of the conventional binary logarithm) in order to simplify the presentation of subsequent relative entropy bounds, but note that the main result is independent of this choice. 

Given a channel $T: \mathcal{B}(A)\to\mathcal{B}(B)$, its Holevo quantity, $C_{\text{H}}(T)$, can be expressed as \cite{Schumacher_signal_ensembles, Ohya_1997onQuantChannels, tomamichel_second-order_2015}
\begin{align} \label{eq:holevo capacity expression}
C_{\text{H}}(T) = \frac1{\ln 2}\inf_{\sigma_B}\sup_{\rho_A} D \left( T(\rho_A) \middle\| \sigma_B \right)\,,
\end{align}
where $\rho_A$ and $\sigma_B$ denote arbitrary quantum states on $\cS(A)$ and $\cS(B)$, respectively. W.l.o.g.\ $\sigma_B$ can be restricted to the channel image \cite{Schumacher_signal_ensembles}. 

The entanglement-assisted capacity of $T$ is defined in \cite{Bennet_etAl_entanglementassisted_descript} and shown to equal  
\begin{align} \label{CEdef}
& C_{\text{E}}(T) = \frac1{\ln 2} \sup_{\rho_{A'A}} D \big( (\id_{A'}\otimes T) \rho_{A'A} \big\| \rho_{A'} \otimes T(\rho_A) \big)\,,
\end{align} 
where $A'$ is an arbitrary auxiliary system with $d_{A'}=d_A$, $\rho_{A'A}$ is optimized over all pure states in $\cS_*(A'A)$, and we use the standard notation $\rho_A=\Tr_{A'} \left[  \rho_{A'A} \right]$, where $\Tr_{A'}$ denotes the partial trace w.r.t.\ $A'$.
The relative entropy expression corresponds to the mutual information of the channel $T$. We further note that
\begin{align}
\label{eq:proof bound:EAcapacity 1st upper bound}
    &D \left( (\id_{A'}\otimes T) \rho_{A'A} \middle\| \rho_{A'} \otimes T(\rho_A) \right) \notag\\
    &\quad \leq D( (\id_{A'}\otimes T) \rho_{A'A} \| \rho_{A'} \otimes \sigma_B)
\end{align}
for any state $\sigma_B$. In order to obtain a useful upper bound on $C_{\text{E}}$, we will make a judicious choice of $\sigma_B(\rho_{A'A})$, dependent on $\rho_{A' A}$, in \eqref{eq:proof bound:EAcapacity 1st upper bound} and apply some basic estimates of $ D ( \cdot \| \cdot )$ that we next proceed to establish.

\subsection{Relative entropy estimates} \label{subsec:preliminary entropy estimates}

When $\phi \in \mathcal{B}(A)$ is positive definite, we denote by $K_\phi$ the  quadratic form, defined on the self-adjoint part of $\cB(A)$, corresponding to $\nabla\ln\phi$, i.e. 
\begin{equation}\label{quadr}
K_\phi(\eta) = \left( \eta \,|\, \nabla_\eta\ln\phi \,\right) = \int_0^\infty dx\, \tr\, \big[\eta (\phi +x  \text{I}_A)^{-1}\big]^2\,,
\end{equation}
 where $\left( \cdot|\cdot \right)$ denotes the Hilbert-Schmidt inner product. See e.g. the End Matter section for elaborations on the equality above. 
 

In the proof of the main theorem, we shall make use of the following estimates for $D \left( \rho \middle\| \tau \right)$ in terms of $K_\tau(\rho-\tau)$


\begin{enumerate}[label=\upshape(\roman*), wide, labelindent=0pt]
\item\label{item:sUpper} For arbitrary $\rho,\tau\in \cS(A)$, where $\tau$ is positive definite, it holds that
\begin{equation}\label{eq: Supper} 
D \left( \rho \middle\| \tau \right) \, \leq \,  K_\tau(\rho-\tau)\,.
\end{equation}
\item\label{item:sLow} For arbitrary $\rho,\tau\in \cS(A)$, where $\tau$ is positive definite such that $k\tau\geq \rho$ for a constant $k \geq 1$, it holds that
\begin{align}\label{eq: Slow} 
D \left( \rho \middle\| \tau \right) \, \geq \, g(k) K_\tau(\rho-\tau)\,
\end{align}
where 
$$
g(k) = \frac{k\ln k -k +1}{(k-1)^2}\,.
$$
\end{enumerate}




The inequality in \ref{item:sUpper} is a well known consequence of the concavity of the logarithm, see e.g. \cite{Audenaert_2005_Rentropybounds}. For full-rank $\rho$, the proof of both estimates can be found in \cite[Lemma~2.2]{Gao_2022}. 
For the benefit of the reader, we provide proofs of both \ref{item:sUpper} and \ref{item:sLow} in the Appendix. 


For later reference we note that, if the eigenvalues of the state $\tau$ are $\mu_1\leq \dots\leq \mu_{d_A}$ with a corresponding orthonormal basis of eigenvectors  $|h_1\rangle,\dots, |h_{d_A}\rangle$, then  we have by \eqref{quadr} that
\begin{align}
 K_{\tau}(\eta) &=  \int_0^\infty dx\, \sum_{k,l=1}^{d_A} |\langle h_k\,| \eta |\, h_l\rangle|^2 (x + \mu_k)^{-1}(x + \mu_l)^{-1}\notag \\ \label{K-simple} & = 
 \sum_{k,l=1}^{d_A} |\langle h_k\,| \eta |\, h_l\rangle|^2 \frac{\ln \mu_k -\ln\mu_l}{\mu_k -\mu_l}
\end{align}
for any self-adjoint $\eta\in \cB(A)$, where the expression  $\frac{\ln \mu_k -\ln\mu_l}{\mu_k -\mu_l}$ is to be set equal to $\mu_k^{-1}$ in case $\mu_k=\mu_l$. 

\smallskip

If $\tau$ is not positive definite, i.e. if $\mu_k=0$ for some $k$ with notation as above, we may replace $\tau$ by $\tau +\epsilon \text{I}_A, \epsilon >0$, in the previous arguments  and observe that 
\begin{align*}
D \left( \rho \middle\| \tau \right) = \lim_{\epsilon\to 0} \Tr \left[ \rho\ln \rho - \rho\ln(\tau + \epsilon \text{I}_A) \right]  
\end{align*}
to conclude that the inequalities \eqref{eq: Supper} and \eqref{eq: Slow} still hold, with $K_\tau$ given by \eqref{K-simple} provided the conventions 
\begin{align*}
& |\langle h_k\,| \eta |\, h_l\rangle|^2 \frac{\ln \mu_k -\ln\mu_l}{\mu_k -\mu_l}   \\
& = \begin{cases} \infty\; \ \mbox{if $\langle h_k\,| \eta |\, h_l\rangle \neq 0$ and $\mu_k=0$ or $\mu_l=0$}\\ 0\quad \: \mbox{if $\langle h_k\,| \eta |\, h_l\rangle = 0$ and $\mu_k=0$ or $\mu_l=0$}\end{cases}
\end{align*}
are used. Thus, we can restrict the sum in (\ref{K-simple}) to include only pairs of non-zero eigenvalues of $\tau$ provided $\ker(\tau) \subseteq \ker(\rho)$ and still use the inequalities (\ref{eq: Supper}) and (\ref{eq: Slow}) in this case. This will be done without further notice in the remainder of this article.

\subsection{Proof of the bound on \texorpdfstring{$C_{\text{E}}/C$}{CEoverC}} \label{subsec:proof of simple bound}

As both sides of the stated inequality vanish if $d_A=1$, we can assume in the following that $d_A\geq 2$. 

Recalling (\ref{eq:proof bound:EAcapacity 1st upper bound}) and \eqref{eq:holevo capacity expression}, our goal is to establish an upper bound on $D \left( (\id_{A'}\otimes T)\rho_{A'A} \middle\| \rho_{A'}\otimes{\sigma_B(\rho_{A'A})} \right)$ for all $\rho_{A' A} \in \cS_*(A' A)$ and suitable $\sigma_B(\rho_{A'A})$ in terms of  $D \left( T(\rho'_A) \middle\|\sigma'_B \right),\,\sigma'_B\in \cS(B)$, for some $\rho'_A\in \cS(A)$. $\rho_{A'A}$ will by assumption be pure, i.e. $\rho_{A'A} = |v\rangle\langle v|$ for some $\ket{v} \in \mathcal{H}_{A' A}$, $\braket{v|v}=1$. 
Given a state $\sigma_B$, let the replacement channel $R_{\sigma_B}:{\mathcal B}(A) \to {\mathcal B}(B)$ be defined by 
\begin{align*}
R_{\sigma_B} (\phi) = {\rm tr}(\phi) \sigma_B\,,\quad \phi\in {\mathcal B}(A)\,.
\end{align*}
Then $(\id_{A'}\otimes R_{\sigma_B})(\rho_{A'A}) = \rho_A\otimes \sigma_B$ for any $\rho_{A'A}\in \cS({A'A})$.
Introducing  the notation $\Delta T = T-R_{\sigma_B}$, we have $\Delta T(\rho_A) = T(\rho_A) - \sigma_B$ and 
\begin{align*}
(\id_{A'}\otimes \Delta T)(\rho_{A'A}) = (\id_{A'}\otimes T)(\rho_{A'A}) - \rho_{A'}\otimes \sigma_B
\end{align*}
for any $\rho_{A'A}\in \cS({A'A})$. By \ref{item:sUpper}, we hence have
\begin{align}
& D \left( (\id_{A'}\otimes T)(\rho_{A'A}) \middle\| \rho_{A'}\otimes{\sigma_B} \right) \notag \\
& \qquad \leq K_{\rho_{A'}\otimes \sigma_B} \left( (\id_{A'}\otimes  \Delta T)(\rho_{A'A}) \vphantom{1^1} \right)\,.\label{est1}
\end{align}
 We next proceed to estimate the last expression in terms of quantities of the form $ K_{\sigma_B} (\Delta T(\rho'_A))$ for suitable pure states $\rho'_A\in \cS_*(A)$. As is well known, we can write $|v\rangle$ in the form $|v\rangle = \sum_{k=1}^{d_A} \alpha_k |e_k\rangle \otimes |f_k\rangle$, where $|e_1\rangle,\dots, |e_{d_A}\rangle$ and $|f_1\rangle,\dots, |f_{d_A}\rangle$ are orthonormal bases for $A'$ and $A$, respectively, and $\alpha_1\geq \dots\geq \alpha_{d_A}$ are non-negative real numbers whose squares sum to $1$. We then have
\begin{align}\label{rhoform}
\rho_{A'A} = \sum_{k,l=1}^{d_A} &\alpha_k\alpha_l |e_k\rangle\langle e_l|\otimes |f_k\rangle\langle f_l| \; , \\ \label{bigeta}
(\id_{A'}\otimes \Delta T)(\rho_{A'A}) & =  \sum_{k,l=1}^{d_A} \alpha_k\alpha_l |e_k\rangle\langle e_l|\otimes \Delta T(|f_k\rangle\langle f_l|)\,.
\end{align}
Moreover, $
\rho_{A'}= \sum_{k=1}^{d_A} \alpha_k^2 |e_k\rangle\langle e_k|$, which implies 
$$
\rho_{A'}\otimes \sigma_B =  \sum_{k=1}^{d_A}\sum_{r=1}^{d_B} \alpha_k^2 \mu_r\,|e_k\rangle\langle e_k|\otimes |g_r\rangle\langle g_r|\,,
$$
where the eigenvalues of $\sigma_B$ have been denoted by $\mu_1\geq\dots\geq \mu_{d_B} \geq 0$  with a corresponding orthonormal basis of eigenvectors $ \ket{g_1},\dots, \ket{g_{d_B}}$. Using this diagonal form of $\rho_{A'}\otimes\sigma_B$ in \eqref{K-simple} with the system $A$ replaced by $A'B$ and $\tau = \rho_{A'}\otimes\sigma_B$ and $\eta = (\id_{A'}\otimes \Delta T ) (\rho_{A'A})$ given by \eqref{bigeta}, we get 
\begin{align}
&K_{\rho_{A'}\otimes\sigma_B} ((\id_{A'}\otimes \Delta T)(\rho_{A'A}))\nonumber\\
& = \sum_{k,l,r,s} \frac{\ln\alpha_k^2\mu_r - \ln\alpha_l^2\mu_s}{\alpha_k^2\mu_r -\alpha_l^2\mu_s} \alpha_k^2\alpha_l^2 |\langle g_r\,| \Delta T(|f_k\rangle \langle f_l|)|\,g_s\rangle|^2 \nonumber \\ & \leq 
\sum_{k,l,r,s} \frac{\ln\mu_r - \ln\mu_s}{\mu_r -\mu_s} \max\{\alpha_k^2,\alpha_l^2\}|\langle g_r\,| \Delta T(|f_k\rangle \langle f_l|)|\, g_s\rangle|^2,
\label{mainest}
\end{align}
where the inequality follows by using  that $\ln$ is a concave function and hence the slope $\frac{\ln x -\ln y}{x-y}$ of the secant connecting the points $(x,\ln x)$ and $(y,\ln y)$ is a decreasing function of both $x$ and $y$.  

The states $\chi^a_{k,l} := \frac 12 \left( |f_k \rangle+ i^a  |f_l\rangle \right) \left( \langle f_k | +(-i)^a \langle f_l| \right)$, $k\neq l$ and $a=0,1,2,3$, are pure and are seen to  satisfy the identities $|f_k\rangle\langle f_l|+ |f_l\rangle\langle f_k| = \chi^0_{k,l} -\chi^2_{k,l} $
and $|f_k\rangle\langle f_l| - |f_l\rangle\langle f_k| = i(\chi^1_{k,l} -\chi^3_{k,l})$.
By linearity of $\Delta T$, these imply
\begin{align}
& \big|\langle g_r\,| \Delta T(|f_k\rangle \langle f_l|)|\, g_s\rangle\big|^2 + \big|\langle g_r\,| \Delta T(|f_l\rangle \langle f_k|)|\, g_s\rangle\big|^2 \notag \\ 
& \quad \leq \sum_{a=0}^3 \big|\langle g_r\,| \Delta T(\chi_{k,l}^a)| \, g_s\rangle\big|^2\,.\notag
\end{align}
Using this estimate in (\ref{mainest}) and recalling (\ref{K-simple}) we find that 
\begin{align}\label{est2}
 & K_{\rho_{A'}\otimes\sigma_B} ((\id_{A'}\otimes \Delta T)\rho_{A'A})  \leq \sum_{k} \alpha_k^2 K_{\sigma_B}(\Delta T(|f_k\rangle\langle f_k|))\nonumber\\ &~~~~~~~~+ \frac 1{2}\sum_{a=0}^3 \sum_{k\neq l} \max\{\alpha_k^2,\alpha_l^2\} K_{\sigma_B} (\Delta T(\chi_{k,l}^a))\,.
\end{align}
We shall choose $\sigma_B$ below such that there exists a $k_A\geq 1$, depending only on $d_A$, fulfilling 
\begin{align} \label{cond1}
k_A  \sigma_B\,\geq\, T(\chi_{k,l}^a)\quad\mbox{and}\quad k_A  \sigma_B\,\geq\, T(|f_k\rangle \langle f_k|)
\end{align}
for all $a$ and $k\neq l$. Hence, by applying \ref{item:sLow} to \eqref{est2} and using $\max\{\alpha_k^2,\alpha_l^2\}\leq \alpha_k^2+\alpha_l^2$, it follows that 
\begin{align}
& K_{\rho_{A'}\otimes\sigma_B}((\id_{A'} \otimes\Delta T)(\rho_{A'A}))  \notag \\
& \qquad \leq  g(k_A)^{-1}\Big( \sum_{k} \alpha_k^2 D(T(|f_k\rangle\langle f_k|)\, 
 \|\,\sigma_B) \notag \\
& \qquad + \frac 12 \sum_{a=0}^3 \sum_{k\neq l} (\alpha_k^2 + \alpha_l^2) D(T(\chi^a_{k,l})\, \|\,\sigma_B)\Big) \,,\label{est3}
\end{align}
holds for all $\rho_{A'A}$ given by \eqref{rhoform} and any state $\sigma_B$ fulfilling \eqref{cond1}.

The expression on the right-hand side of this inequality is a linear combination of relative entropies with positive coefficients which can be further estimated by applying the fact that, for any given convex combination $\rho = \sum_{i=1}^n p_i \rho_i$ of states $\rho_1,\dots,\rho_n \in \cS(B)$, the quantity $\sum_{i=1}^n p_i D \left( \rho_i \middle\| \sigma \right)$ as a function of $\sigma \in \cS(B)$ assumes a unique minimum at $\sigma=\rho$. This  follows immediately from the identity \cite{donald_further_1987} 
\begin{align*} 
\sum_{i=1}^n p_i D \left( \rho_i \middle\| \sigma \right) = \sum_{i=1}^n p_i D \left( \rho_i \middle\| \rho \right) + D \left( \rho \middle\| \sigma \right)
\end{align*}
that is straight-forward to verify, and often called \emph{Donald's identity}. Choosing $\sigma_B(\rho_{A'A})$ so as to minimise the right-hand side of \eqref{est3}, we thus have   
\begin{align} \label{sigmaB}
&M_A\sigma_B(\rho_{A'A}) \notag\\ & = \sum_{k} \alpha^2_k T(|f_k\rangle\langle f_k|) + \frac 12 \sum_{a=0}^3 \sum_{k\neq l} (\alpha^2_{k} +\alpha^2_l)T(\chi^{a}_{k,l})\notag\\
& = \sum_{k} \alpha^2_k T(|f_k\rangle\langle f_k|) + 2 \sum_{k\neq l} (\alpha^2_{k} + \alpha^2_l)T(|f_k\rangle\langle f_k|)\,, 
\end{align}
where we have used that $\chi^0_{k,l} + \chi^2_{k,l} = \chi^1_{k,l} + \chi^3_{k,l} = |f_k\rangle\langle f_k| + |f_l\rangle\langle f_l|$, and $M_A$ is a normalization factor given by 
\begin{align*}
M_A &= \sum_{k} \alpha_k^2 + 2\sum_{k\neq l} (\alpha_k^2 + \alpha_l^2) = 4d_A -3 \,,
\end{align*}
where the last equality follows by recalling that  $\alpha_1^2+\dots+\alpha_{d_A}^2=1$. Moreover, from \eqref{sigmaB} we get  that 
\begin{align*}
& M_A \sigma_B(\rho_{A'A}) \\ & = 2\sum_k T(|f_k\rangle\langle f_k|) + (2d_A-3)\sum_k \alpha_{k}^2 T(|f_k\rangle\langle f_k|)\\ &  \geq 2 T(I_A)\,,\notag 
\end{align*}
from which it follows that \eqref{cond1} holds for $\sigma_B(\rho_{A'A})$ with $k_A = \frac 12 M_A = 2d_A-\frac 32$, as $T(I_A)\geq T(\rho_A)$ for any state $\rho_A$. In conjunction with the definition of $\sigma_B(\rho_{A'A})$ this shows by \eqref{est3} that 
\begin{align*}
& K_{\rho_{A'}\otimes \sigma_B(\rho_{A'A})}((\id_{A'}\otimes \Delta T)(\rho_{A'A})) \\
& \leq \frac{M_A}{g(\frac 12 M_A)} \Big( \sum_{k} p_k D \left( T(|f_k\rangle\langle f_k|) \middle\| \sigma_B(\rho_{A'A}) \right)  \\
& \qquad \quad + \sum_{a=0}^3 \sum_{k \neq l} p^a_{k,l}  D \left( T(\chi^{a}_{k,l}) \middle\| \sigma_B(\rho_{A'A}) \right)\Big)  \\
&  \leq \frac{M_A}{g(\frac 12 M_A)} \Big( \sum_{k} p_k D \left( T(|f_k\rangle\langle f_k|) \middle\| \tau_B \right)  \\
& \qquad \quad +  \sum_{a=0}^3 \sum_{k \neq l} p^a_{k,l} D \left( T(\chi^{a}_{k,l}) \middle\| \tau_B \right)\Big) 
\end{align*}
for all states $\tau_B\in \cS(B)$, where we introduced the probability distribution $\{p_k\}_k\cup \{p_{k, l}^a\}_{k \neq l}^a$ by normalizing the $\alpha$-dependent coefficients with $M_A$.
Estimating $ D(T(\chi^{a}_{k,l})\,\|\, \tau_B)$ by $\sup_{\omega_A\in \cS(A)}  D(T(\omega_A)\,\|\,\tau_B)$, we obtain 
\begin{align} 
 & K_{\rho_{A'}\otimes \sigma_B(\rho_{A'A})}((\id_{A'}\otimes \Delta T)(\rho_{A'A})) \notag \\ \label{est5}
 & \qquad \quad \leq  \frac{M_A}{g(\frac 12 M_A)} \sup_{\omega_A\in \cS(A)} D \left( T(\omega_A) \middle\|  \tau_B \right) \,.
\end{align}
Finally, recalling the definition (\ref{eq:holevo capacity expression}) of $C_{\text{H}}(T)$ and using (\ref{est1}), we conclude by taking the infimum over $\tau_B \in \cS(B)$ on the right-hand side of \eqref{est5} that  
\begin{align*}
D \left( (\id_{A'}\otimes T)(\rho_{A'A}) \middle\| \rho_{A'}\otimes\sigma_B(\rho_{A'A}) \right) \\
\leq  \frac{M_A \ln 2}{g(\frac 12 M_A)}  C_{\text{H}}(T)
\end{align*}
holds for all $\rho_{A' A} \in \cS_*(A' A)$. By evaluating $g(\frac 12 M_A)$ this proves the theorem as a consequence of \eqref{CEdef}, \eqref{eq:proof bound:EAcapacity 1st upper bound} and \eqref{C-CH}. 

Note that $C(T)$ can be replaced with $C_{\text{H}}(T)$ in the theorem.

\section{Conclusion} \label{section:applications}

In realistic quantum communication, the encoders and decoders of sender and receiver are subject gate-level noise. Considering a fixed level of gate noise $p > 0$ independent of the block length of the codes leads to new notions of capacity, in particular raising the question whether the traditional capacity is approached for small $p$ \cite{Christandl_fault_tolerant_f_Qcomm}. 

In this context, the bound 
$$C_{\text{E}}(T) \geq C_{\text{E}}(T, p) \geq C_{\text{E}}(T)-f(p) \frac{C_{\text{E}}(T)}{C(T)},$$ 
has been obtained for the fault-tolerant entanglement-assisted capacity $C_{\text{E}}(T, p)$ \cite{BelzigIEEE} (see also \cite{Belzig23}). 
The function $f(p)$ goes to zero as $p$ goes to zero and depends only on the dimension of the channel. This continuity bound is therefore a priori channel-dependent and it was left open in \cite{BelzigIEEE} whether a bound uniform for channels in a given dimension can be obtained. The  uniform upper bound on $C_{\text{E}}(T)/C(T)$ presented in this work now implies such a uniform bound.

A question for future investigations is whether one can improve our $O(d_A^2/\ln d_A) $ bound to $O(d_A)$, which would be tight in view of the existing lower bounds. 

We also leave open whether there exists a bound on the ratio of capacities which only depends on the output dimension, or conversely, whether there exists a series of quantum channels with fixed output dimension (and necessarily, as a consequence of our work, with increasing input dimension) for which the ratio diverges.

\begin{acknowledgments}
We would like to thank Alexander M{\"u}ller-Hermes and Andreas Winter for helpful discussions. We would also like to thank Christoph Hirche for pointing out reference \cite{Gao_2022} to us.

MC, BD and LHW acknowledge financial support from VILLUM FONDEN via the QMATH Centre of Excellence (Grant No.10059). MC also thanks the European Research Council (ERC Grant Agreement No. 818761) and the Novo Nordisk Foundation (grant NNF20OC0059939 ‘Quantum for Life’). PB acknowledges financial support  from the Canada First Research Excellence Fund. MT is supported by the National Research Foundation, Singapore and A*STAR under its CQT Bridging Grant.

\end{acknowledgments}

\bibliography{references}

\begin{thebibliography}{27}%
\makeatletter
\providecommand \@ifxundefined [1]{%
 \@ifx{#1\undefined}
}%
\providecommand \@ifnum [1]{%
 \ifnum #1\expandafter \@firstoftwo
 \else \expandafter \@secondoftwo
 \fi
}%
\providecommand \@ifx [1]{%
 \ifx #1\expandafter \@firstoftwo
 \else \expandafter \@secondoftwo
 \fi
}%
\providecommand \natexlab [1]{#1}%
\providecommand \enquote  [1]{``#1''}%
\providecommand \bibnamefont  [1]{#1}%
\providecommand \bibfnamefont [1]{#1}%
\providecommand \citenamefont [1]{#1}%
\providecommand \href@noop [0]{\@secondoftwo}%
\providecommand \href [0]{\begingroup \@sanitize@url \@href}%
\providecommand \@href[1]{\@@startlink{#1}\@@href}%
\providecommand \@@href[1]{\endgroup#1\@@endlink}%
\providecommand \@sanitize@url [0]{\catcode `\\12\catcode `\$12\catcode `\&12\catcode `\#12\catcode `\^12\catcode `\_12\catcode `\%12\relax}%
\providecommand \@@startlink[1]{}%
\providecommand \@@endlink[0]{}%
\providecommand \url  [0]{\begingroup\@sanitize@url \@url }%
\providecommand \@url [1]{\endgroup\@href {#1}{\urlprefix }}%
\providecommand \urlprefix  [0]{URL }%
\providecommand \Eprint [0]{\href }%
\providecommand \doibase [0]{https://doi.org/}%
\providecommand \selectlanguage [0]{\@gobble}%
\providecommand \bibinfo  [0]{\@secondoftwo}%
\providecommand \bibfield  [0]{\@secondoftwo}%
\providecommand \translation [1]{[#1]}%
\providecommand \BibitemOpen [0]{}%
\providecommand \bibitemStop [0]{}%
\providecommand \bibitemNoStop [0]{.\EOS\space}%
\providecommand \EOS [0]{\spacefactor3000\relax}%
\providecommand \BibitemShut  [1]{\csname bibitem#1\endcsname}%
\let\auto@bib@innerbib\@empty
\bibitem [{\citenamefont {Bennett}\ and\ \citenamefont {Wiesner}(1992)}]{Bennett_superdensecoding}%
  \BibitemOpen
  \bibfield  {author} {\bibinfo {author} {\bibfnamefont {C.~H.}\ \bibnamefont {Bennett}}\ and\ \bibinfo {author} {\bibfnamefont {S.~J.}\ \bibnamefont {Wiesner}},\ }\bibfield  {title} {\bibinfo {title} {{Communication via one- and two-particle operators on Einstein-Podolsky-Rosen states}},\ }\href {https://doi.org/10.1103/PhysRevLett.69.2881} {\bibfield  {journal} {\bibinfo  {journal} {Phys. Rev. Lett.}\ }\textbf {\bibinfo {volume} {69}},\ \bibinfo {pages} {2881} (\bibinfo {year} {1992})}\BibitemShut {NoStop}%
\bibitem [{\citenamefont {Mattle}\ \emph {et~al.}(1996)\citenamefont {Mattle}, \citenamefont {Weinfurter}, \citenamefont {Kwiat},\ and\ \citenamefont {Zeilinger}}]{MWKZ96}%
  \BibitemOpen
  \bibfield  {author} {\bibinfo {author} {\bibfnamefont {K.}~\bibnamefont {Mattle}}, \bibinfo {author} {\bibfnamefont {H.}~\bibnamefont {Weinfurter}}, \bibinfo {author} {\bibfnamefont {P.~G.}\ \bibnamefont {Kwiat}},\ and\ \bibinfo {author} {\bibfnamefont {A.}~\bibnamefont {Zeilinger}},\ }\bibfield  {title} {\bibinfo {title} {Dense coding in experimental quantum communication},\ }\href {https://doi.org/10.1103/PhysRevLett.76.4656} {\bibfield  {journal} {\bibinfo  {journal} {Phys. Rev. Lett.}\ }\textbf {\bibinfo {volume} {76}},\ \bibinfo {pages} {4656} (\bibinfo {year} {1996})}\BibitemShut {NoStop}%
\bibitem [{\citenamefont {Hao}\ \emph {et~al.}(2021)\citenamefont {Hao}, \citenamefont {Shi}, \citenamefont {Li}, \citenamefont {Shapiro}, \citenamefont {Zhuang},\ and\ \citenamefont {Zhang}}]{Hao_practicalEACOMM}%
  \BibitemOpen
  \bibfield  {author} {\bibinfo {author} {\bibfnamefont {S.}~\bibnamefont {Hao}}, \bibinfo {author} {\bibfnamefont {H.}~\bibnamefont {Shi}}, \bibinfo {author} {\bibfnamefont {W.}~\bibnamefont {Li}}, \bibinfo {author} {\bibfnamefont {J.~H.}\ \bibnamefont {Shapiro}}, \bibinfo {author} {\bibfnamefont {Q.}~\bibnamefont {Zhuang}},\ and\ \bibinfo {author} {\bibfnamefont {Z.}~\bibnamefont {Zhang}},\ }\bibfield  {title} {\bibinfo {title} {Entanglement-assisted communication surpassing the ultimate classical capacity},\ }\href {https://doi.org/10.1103/PhysRevLett.126.250501} {\bibfield  {journal} {\bibinfo  {journal} {Phys. Rev. Lett.}\ }\textbf {\bibinfo {volume} {126}},\ \bibinfo {pages} {250501} (\bibinfo {year} {2021})}\BibitemShut {NoStop}%
\bibitem [{\citenamefont {Bennett}\ \emph {et~al.}(1999)\citenamefont {Bennett}, \citenamefont {Shor}, \citenamefont {Smolin},\ and\ \citenamefont {Thapliyal}}]{Bennet_1999_RatioCalculationDepolarising}%
  \BibitemOpen
  \bibfield  {author} {\bibinfo {author} {\bibfnamefont {C.~H.}\ \bibnamefont {Bennett}}, \bibinfo {author} {\bibfnamefont {P.~W.}\ \bibnamefont {Shor}}, \bibinfo {author} {\bibfnamefont {J.~A.}\ \bibnamefont {Smolin}},\ and\ \bibinfo {author} {\bibfnamefont {A.~V.}\ \bibnamefont {Thapliyal}},\ }\bibfield  {title} {\bibinfo {title} {Entanglement-assisted classical capacity of noisy quantum channels},\ }\href {https://doi.org/10.1103/PhysRevLett.83.3081} {\bibfield  {journal} {\bibinfo  {journal} {Phys. Rev. Lett.}\ }\textbf {\bibinfo {volume} {83}},\ \bibinfo {pages} {3081} (\bibinfo {year} {1999})}\BibitemShut {NoStop}%
\bibitem [{\citenamefont {Banaszek}\ \emph {et~al.}(2020)\citenamefont {Banaszek}, \citenamefont {Kunz}, \citenamefont {Jachura},\ and\ \citenamefont {Jarzyna}}]{Banaszek_Qlimits}%
  \BibitemOpen
  \bibfield  {author} {\bibinfo {author} {\bibfnamefont {K.}~\bibnamefont {Banaszek}}, \bibinfo {author} {\bibfnamefont {L.}~\bibnamefont {Kunz}}, \bibinfo {author} {\bibfnamefont {M.}~\bibnamefont {Jachura}},\ and\ \bibinfo {author} {\bibfnamefont {M.}~\bibnamefont {Jarzyna}},\ }\bibfield  {title} {\bibinfo {title} {Quantum limits in optical communications},\ }\href {https://doi.org/10.1109/JLT.2020.2973890} {\bibfield  {journal} {\bibinfo  {journal} {J. Light. Technol.}\ }\textbf {\bibinfo {volume} {38}},\ \bibinfo {pages} {2741} (\bibinfo {year} {2020})}\BibitemShut {NoStop}%
\bibitem [{\citenamefont {Belenchia}\ \emph {et~al.}(2022)\citenamefont {Belenchia} \emph {et~al.}}]{BELENCHIA20221}%
  \BibitemOpen
  \bibfield  {author} {\bibinfo {author} {\bibfnamefont {A.}~\bibnamefont {Belenchia}} \emph {et~al.},\ }\bibfield  {title} {\bibinfo {title} {Quantum physics in space},\ }\href {https://doi.org/https://doi.org/10.1016/j.physrep.2021.11.004} {\bibfield  {journal} {\bibinfo  {journal} {Physics Reports}\ }\textbf {\bibinfo {volume} {951}},\ \bibinfo {pages} {1} (\bibinfo {year} {2022})}\BibitemShut {NoStop}%
\bibitem [{\citenamefont {Nafria}\ and\ \citenamefont {Djordjevic}(2023)}]{Nafria_23}%
  \BibitemOpen
  \bibfield  {author} {\bibinfo {author} {\bibfnamefont {V.}~\bibnamefont {Nafria}}\ and\ \bibinfo {author} {\bibfnamefont {I.~B.}\ \bibnamefont {Djordjevic}},\ }\bibfield  {title} {\bibinfo {title} {Entanglement assisted communication over the free-space optical link with azimuthal phase correction for atmospheric turbulence by adaptive optics},\ }\href {https://doi.org/10.1364/OE.506874} {\bibfield  {journal} {\bibinfo  {journal} {Opt. Express}\ }\textbf {\bibinfo {volume} {31}},\ \bibinfo {pages} {39906} (\bibinfo {year} {2023})}\BibitemShut {NoStop}%
\bibitem [{\citenamefont {Zlotnick}\ \emph {et~al.}(2023)\citenamefont {Zlotnick}, \citenamefont {Bash},\ and\ \citenamefont {Pereg}}]{zlotnick2023}%
  \BibitemOpen
  \bibfield  {author} {\bibinfo {author} {\bibfnamefont {E.}~\bibnamefont {Zlotnick}}, \bibinfo {author} {\bibfnamefont {B.}~\bibnamefont {Bash}},\ and\ \bibinfo {author} {\bibfnamefont {U.}~\bibnamefont {Pereg}},\ }\href {https://arxiv.org/abs/2305.05477} {\bibinfo {title} {Entanglement-assisted covert communication via qubit depolarizing channels}} (\bibinfo {year} {2023}),\ \Eprint {https://arxiv.org/abs/2305.05477} {arXiv:2305.05477 [quant-ph]} \BibitemShut {NoStop}%
\bibitem [{\citenamefont {Bash}\ \emph {et~al.}(2015)\citenamefont {Bash}, \citenamefont {Gheorghe}, \citenamefont {Patel}, \citenamefont {Habif}, \citenamefont {Goeckel}, \citenamefont {Towsley},\ and\ \citenamefont {Guha}}]{bash_quantum-secure_2015}%
  \BibitemOpen
  \bibfield  {author} {\bibinfo {author} {\bibfnamefont {B.~A.}\ \bibnamefont {Bash}}, \bibinfo {author} {\bibfnamefont {A.~H.}\ \bibnamefont {Gheorghe}}, \bibinfo {author} {\bibfnamefont {M.}~\bibnamefont {Patel}}, \bibinfo {author} {\bibfnamefont {J.~L.}\ \bibnamefont {Habif}}, \bibinfo {author} {\bibfnamefont {D.}~\bibnamefont {Goeckel}}, \bibinfo {author} {\bibfnamefont {D.}~\bibnamefont {Towsley}},\ and\ \bibinfo {author} {\bibfnamefont {S.}~\bibnamefont {Guha}},\ }\bibfield  {title} {{\selectlanguage {english}\bibinfo {title} {Quantum-secure covert communication on bosonic channels}},\ }\href {https://doi.org/10.1038/ncomms9626} {\bibfield  {journal} {\bibinfo  {journal} {Nat. Commun.}\ }\textbf {\bibinfo {volume} {6}},\ \bibinfo {pages} {8626} (\bibinfo {year} {2015})}\BibitemShut {NoStop}%
\bibitem [{\citenamefont {Tahmasbi}\ and\ \citenamefont {Bloch}(2021)}]{Tahmasbi2020c}%
  \BibitemOpen
  \bibfield  {author} {\bibinfo {author} {\bibfnamefont {M.}~\bibnamefont {Tahmasbi}}\ and\ \bibinfo {author} {\bibfnamefont {M.~R.}\ \bibnamefont {Bloch}},\ }\bibfield  {title} {\bibinfo {title} {On covert quantum sensing and the benefits of entanglement},\ }\href {https://doi.org/10.1109/JSAIT.2021.3056640} {\bibfield  {journal} {\bibinfo  {journal} {IEEE J. Sel. Areas Inf. Theory}\ }\textbf {\bibinfo {volume} {2}},\ \bibinfo {pages} {352} (\bibinfo {year} {2021})},\ \Eprint {https://arxiv.org/abs/2008.01264} {2008.01264} \BibitemShut {NoStop}%
\bibitem [{\citenamefont {Holevo}(2002)}]{Holevo_conjectureChannel}%
  \BibitemOpen
  \bibfield  {author} {\bibinfo {author} {\bibfnamefont {A.~S.}\ \bibnamefont {Holevo}},\ }\bibfield  {title} {\bibinfo {title} {{On entanglement-assisted classical capacity}},\ }\href {https://doi.org/10.1063/1.1495877} {\bibfield  {journal} {\bibinfo  {journal} {J. Math. Phys.}\ }\textbf {\bibinfo {volume} {43}},\ \bibinfo {pages} {4326} (\bibinfo {year} {2002})}\BibitemShut {NoStop}%
\bibitem [{\citenamefont {King}(2003)}]{King_depolarising}%
  \BibitemOpen
  \bibfield  {author} {\bibinfo {author} {\bibfnamefont {C.}~\bibnamefont {King}},\ }\bibfield  {title} {\bibinfo {title} {The capacity of the quantum depolarizing channel},\ }\href {https://doi.org/10.1109/TIT.2002.806153} {\bibfield  {journal} {\bibinfo  {journal} {IEEE Trans. Inf. Theory}\ }\textbf {\bibinfo {volume} {49}},\ \bibinfo {pages} {221} (\bibinfo {year} {2003})}\BibitemShut {NoStop}%
\bibitem [{\citenamefont {Bennett}\ \emph {et~al.}(2002)\citenamefont {Bennett}, \citenamefont {Shor}, \citenamefont {Smolin},\ and\ \citenamefont {Thapliyal}}]{Bennet_etAl_entanglementassisted_descript}%
  \BibitemOpen
  \bibfield  {author} {\bibinfo {author} {\bibfnamefont {C.}~\bibnamefont {Bennett}}, \bibinfo {author} {\bibfnamefont {P.}~\bibnamefont {Shor}}, \bibinfo {author} {\bibfnamefont {J.}~\bibnamefont {Smolin}},\ and\ \bibinfo {author} {\bibfnamefont {A.}~\bibnamefont {Thapliyal}},\ }\bibfield  {title} {\bibinfo {title} {Entanglement-assisted capacity of a quantum channel and the reverse shannon theorem},\ }\href {https://doi.org/10.1109/TIT.2002.802612} {\bibfield  {journal} {\bibinfo  {journal} {IEEE Trans. Inf. Theory}\ }\textbf {\bibinfo {volume} {48}},\ \bibinfo {pages} {2637} (\bibinfo {year} {2002})}\BibitemShut {NoStop}%
\bibitem [{\citenamefont {Shi}\ \emph {et~al.}(2020)\citenamefont {Shi}, \citenamefont {Zhang},\ and\ \citenamefont {Zhuang}}]{Shi_2020}%
  \BibitemOpen
  \bibfield  {author} {\bibinfo {author} {\bibfnamefont {H.}~\bibnamefont {Shi}}, \bibinfo {author} {\bibfnamefont {Z.}~\bibnamefont {Zhang}},\ and\ \bibinfo {author} {\bibfnamefont {Q.}~\bibnamefont {Zhuang}},\ }\bibfield  {title} {\bibinfo {title} {Practical route to entanglement-assisted communication over noisy bosonic channels},\ }\href {https://doi.org/10.1103/PhysRevApplied.13.034029} {\bibfield  {journal} {\bibinfo  {journal} {Phys. Rev. Appl.}\ }\textbf {\bibinfo {volume} {13}},\ \bibinfo {pages} {034029} (\bibinfo {year} {2020})}\BibitemShut {NoStop}%
\bibitem [{\citenamefont {Watrous}(2018)}]{Watrous_QIT}%
  \BibitemOpen
  \bibfield  {author} {\bibinfo {author} {\bibfnamefont {J.}~\bibnamefont {Watrous}},\ }\href@noop {} {\emph {\bibinfo {title} {The Theory of Quantum Information}}},\ \bibinfo {edition} {1st}\ ed.\ (\bibinfo  {publisher} {Cambridge University Press},\ \bibinfo {address} {USA},\ \bibinfo {year} {2018})\BibitemShut {NoStop}%
\bibitem [{\citenamefont {Wilde}(2013)}]{Wildebook}%
  \BibitemOpen
  \bibfield  {author} {\bibinfo {author} {\bibfnamefont {M.~M.}\ \bibnamefont {Wilde}},\ }\href@noop {} {\emph {\bibinfo {title} {Quantum Information Theory}}},\ \bibinfo {edition} {1st}\ ed.\ (\bibinfo  {publisher} {Cambridge University Press},\ \bibinfo {address} {USA},\ \bibinfo {year} {2013})\BibitemShut {NoStop}%
\bibitem [{\citenamefont {Schumacher}\ and\ \citenamefont {Westmoreland}(1997)}]{schumacher_sending_1997}%
  \BibitemOpen
  \bibfield  {author} {\bibinfo {author} {\bibfnamefont {B.}~\bibnamefont {Schumacher}}\ and\ \bibinfo {author} {\bibfnamefont {M.~D.}\ \bibnamefont {Westmoreland}},\ }\bibfield  {title} {{\selectlanguage {english}\bibinfo {title} {Sending classical information via noisy quantum channels}},\ }\href {https://doi.org/10.1103/PhysRevA.56.131} {\bibfield  {journal} {\bibinfo  {journal} {Phys. Rev. A}\ }\textbf {\bibinfo {volume} {56}},\ \bibinfo {pages} {131} (\bibinfo {year} {1997})}\BibitemShut {NoStop}%
\bibitem [{\citenamefont {Holevo}(1998)}]{holevo_capacity_1998}%
  \BibitemOpen
  \bibfield  {author} {\bibinfo {author} {\bibfnamefont {A.}~\bibnamefont {Holevo}},\ }\bibfield  {title} {\bibinfo {title} {The capacity of the quantum channel with general signal states},\ }\href {https://doi.org/10.1109/18.651037} {\bibfield  {journal} {\bibinfo  {journal} {IEEE Trans. Inf. Theory}\ }\textbf {\bibinfo {volume} {44}},\ \bibinfo {pages} {269} (\bibinfo {year} {1998})}\BibitemShut {NoStop}%
\bibitem [{\citenamefont {Schumacher}\ and\ \citenamefont {Westmoreland}(2001)}]{Schumacher_signal_ensembles}%
  \BibitemOpen
  \bibfield  {author} {\bibinfo {author} {\bibfnamefont {B.}~\bibnamefont {Schumacher}}\ and\ \bibinfo {author} {\bibfnamefont {M.~D.}\ \bibnamefont {Westmoreland}},\ }\bibfield  {title} {\bibinfo {title} {Optimal signal ensembles},\ }\href {https://doi.org/10.1103/PhysRevA.63.022308} {\bibfield  {journal} {\bibinfo  {journal} {Phys. Rev. A}\ }\textbf {\bibinfo {volume} {63}},\ \bibinfo {pages} {022308} (\bibinfo {year} {2001})}\BibitemShut {NoStop}%
\bibitem [{\citenamefont {Ohya}\ \emph {et~al.}(2008)\citenamefont {Ohya}, \citenamefont {Petz},\ and\ \citenamefont {Watanabe}}]{Ohya_1997onQuantChannels}%
  \BibitemOpen
  \bibfield  {author} {\bibinfo {author} {\bibfnamefont {M.}~\bibnamefont {Ohya}}, \bibinfo {author} {\bibfnamefont {D.}~\bibnamefont {Petz}},\ and\ \bibinfo {author} {\bibfnamefont {N.}~\bibnamefont {Watanabe}},\ }\bibinfo {title} {On capacities of quantum channels},\ in\ \href {https://doi.org/10.1142/9789812794208_0019} {\emph {\bibinfo {booktitle} {Selected Papers of M. Ohya}}}\ (\bibinfo  {publisher} {World Scientific Publishing Co.},\ \bibinfo {year} {2008})\ pp.\ \bibinfo {pages} {291--308}\BibitemShut {NoStop}%
\bibitem [{\citenamefont {Tomamichel}\ and\ \citenamefont {Tan}(2015)}]{tomamichel_second-order_2015}%
  \BibitemOpen
  \bibfield  {author} {\bibinfo {author} {\bibfnamefont {M.}~\bibnamefont {Tomamichel}}\ and\ \bibinfo {author} {\bibfnamefont {V.~Y.~F.}\ \bibnamefont {Tan}},\ }\bibfield  {title} {{\selectlanguage {english}\bibinfo {title} {Second-order asymptotics for the classical capacity of image-additive quantum channels}},\ }\href {https://doi.org/10.1007/s00220-015-2382-0} {\bibfield  {journal} {\bibinfo  {journal} {Commun. Math. Phys.}\ }\textbf {\bibinfo {volume} {338}},\ \bibinfo {pages} {103} (\bibinfo {year} {2015})}\BibitemShut {NoStop}%
\bibitem [{\citenamefont {Audenaert}\ and\ \citenamefont {Eisert}(2005)}]{Audenaert_2005_Rentropybounds}%
  \BibitemOpen
  \bibfield  {author} {\bibinfo {author} {\bibfnamefont {K.~M.~R.}\ \bibnamefont {Audenaert}}\ and\ \bibinfo {author} {\bibfnamefont {J.}~\bibnamefont {Eisert}},\ }\bibfield  {title} {\bibinfo {title} {{Continuity bounds on the quantum relative entropy}},\ }\href {https://doi.org/10.1063/1.2044667} {\bibfield  {journal} {\bibinfo  {journal} {J. Math. Phys.}\ }\textbf {\bibinfo {volume} {46}},\ \bibinfo {pages} {102104} (\bibinfo {year} {2005})}\BibitemShut {NoStop}%
\bibitem [{\citenamefont {Gao}\ and\ \citenamefont {Rouzé}(2022)}]{Gao_2022}%
  \BibitemOpen
  \bibfield  {author} {\bibinfo {author} {\bibfnamefont {L.}~\bibnamefont {Gao}}\ and\ \bibinfo {author} {\bibfnamefont {C.}~\bibnamefont {Rouzé}},\ }\bibfield  {title} {\bibinfo {title} {Complete entropic inequalities for quantum markov chains},\ }\href {https://doi.org/10.1007/s00205-022-01785-1} {\bibfield  {journal} {\bibinfo  {journal} {Archive for Rational Mechanics and Analysis}\ }\textbf {\bibinfo {volume} {245}},\ \bibinfo {pages} {183–238} (\bibinfo {year} {2022})}\BibitemShut {NoStop}%
\bibitem [{\citenamefont {Donald}(1987)}]{donald_further_1987}%
  \BibitemOpen
  \bibfield  {author} {\bibinfo {author} {\bibfnamefont {M.~J.}\ \bibnamefont {Donald}},\ }\bibfield  {title} {{\selectlanguage {english}\bibinfo {title} {Further results on the relative entropy}},\ }\href {https://doi.org/10.1017/S030500410006672X} {\bibfield  {journal} {\bibinfo  {journal} {Math. Proc. Camb. Philos. Soc.}\ }\textbf {\bibinfo {volume} {101}},\ \bibinfo {pages} {363} (\bibinfo {year} {1987})}\BibitemShut {NoStop}%
\bibitem [{\citenamefont {Christandl}\ and\ \citenamefont {Müller-Hermes}(2024)}]{Christandl_fault_tolerant_f_Qcomm}%
  \BibitemOpen
  \bibfield  {author} {\bibinfo {author} {\bibfnamefont {M.}~\bibnamefont {Christandl}}\ and\ \bibinfo {author} {\bibfnamefont {A.}~\bibnamefont {Müller-Hermes}},\ }\bibfield  {title} {\bibinfo {title} {Fault-tolerant coding for quantum communication},\ }\href {https://doi.org/10.1109/TIT.2022.3169438} {\bibfield  {journal} {\bibinfo  {journal} {IEEE Trans. Inf. Theory}\ }\textbf {\bibinfo {volume} {70}},\ \bibinfo {pages} {282} (\bibinfo {year} {2024})}\BibitemShut {NoStop}%
\bibitem [{\citenamefont {Belzig}\ \emph {et~al.}(2024)\citenamefont {Belzig}, \citenamefont {Christandl},\ and\ \citenamefont {Müller-Hermes}}]{BelzigIEEE}%
  \BibitemOpen
  \bibfield  {author} {\bibinfo {author} {\bibfnamefont {P.}~\bibnamefont {Belzig}}, \bibinfo {author} {\bibfnamefont {M.}~\bibnamefont {Christandl}},\ and\ \bibinfo {author} {\bibfnamefont {A.}~\bibnamefont {Müller-Hermes}},\ }\bibfield  {title} {\bibinfo {title} {Fault-tolerant coding for entanglement-assisted communication},\ }\href {https://doi.org/10.1109/TIT.2024.3354319} {\bibfield  {journal} {\bibinfo  {journal} {IEEE Trans. Inf. Theory}\ }\textbf {\bibinfo {volume} {70}},\ \bibinfo {pages} {2655} (\bibinfo {year} {2024})}\BibitemShut {NoStop}%
\bibitem [{\citenamefont {Belzig}(2023)}]{Belzig23}%
  \BibitemOpen
  \bibfield  {author} {\bibinfo {author} {\bibfnamefont {P.}~\bibnamefont {Belzig}},\ }\emph {\bibinfo {title} {Quantum communication and fault-tolerance}},\ \href {https://noter.math.ku.dk/phd23pb.pdf} {Ph.D. thesis},\ \bibinfo  {school} {University of Copenhagen} (\bibinfo {year} {2023})\BibitemShut {NoStop}%
\end{thebibliography}%


\subsection*{Appendix: Proof of relative entropy inequalities}

Before proving the relative entropy estimates \ref{item:sUpper} and \ref{item:sLow}, we shall derive a useful integral representation for the relative entropy $D(\rho\,\|\,\tau)$, relating it to the quadratic form $K_\phi$ given in (\ref{quadr}).

Making use of the well-known integral representation of the matrix logarithm
\begin{align*}
\ln \phi = \int_0^\infty dx\,\Big((1 +x)^{-1}\text{I}_A - (\phi + x \text{I}_A )^{-1}\Big)\,,
\end{align*}
together with the resolvent identity
\begin{align*} &
(\phi +x \text{I}_A )^{-1} - (\psi +x \text{I}_A )^{-1} \\ 
&=  (\phi +x \text{I}_A )^{-1}(\psi-\phi)(\psi +x \text{I}_A )^{-1}\,,
\end{align*}
it follows that
\begin{align} \label{eq:logarithm difference}
\ln \psi - \ln \phi = \int_0^\infty dx\,(\phi + x \text{I}_A )^{-1}(\psi-\phi)(\psi + x \text{I}_A )^{-1} 
\end{align}
for any positive definite operators $\phi, \psi$ in $\cB(A)$.
Setting $\psi = \phi + t\cdot\eta,\, t\in\mathbb R$, in \eqref{eq:logarithm difference} and differentiating w.r.t.\ $t$ at $t=0$, it follows that the derivative of $\ln $ at $\phi$ in the direction $\eta\in \cB(A)$ is given by 
\begin{align*} 
\nabla_{\eta}\ln (\phi) = \int_0^\infty dx \, (\phi + x \text{I}_A )^{-1}\eta (\phi + x \text{I}_A )^{-1} \,.
\end{align*}
Thus, given any smooth curve $\rho_t,\,0\leq t\leq 1$, of positive definite states in $\cS(A)$  such that $\rho_0=\tau$ and $\rho_1 = \rho$, we have 
\begin{equation}\nonumber
\ln \rho - \ln \tau = \int_0^1 dt\,\int_0^\infty dx\,(\rho_t + x \text{I}_A)^{-1}\dot\rho_t(\rho_t + x \text{I}_A)^{-1} \,,
\end{equation}
where $\dot\rho_t$ denotes the derivative of $\rho_t$. In particular, setting $\rho_t = t\rho + (1-t)\tau$, we have $\dot\rho_t=\rho-\tau$, and using this expression in the definition of $D(\rho\,\|\,\tau)$ we obtain 
\begin{align}
D(\rho&\,\|\,\tau) = \nonumber\\
&\int_0^1 dt \int_0^\infty dx\, \tr\,\rho (\rho_t +x \text{I}_A)^{-1}(\rho - \tau)(\rho_t + x \text{I}_A)^{-1}\,.\nonumber
\end{align}
Writing the first factor of the integrand as $\rho = (1-t)(\rho-\tau) + \rho_t$ and using  
\begin{align}
& \int_0^1 dt \int_0^\infty dx\, \tr\, \rho_t (\rho_t +x \text{I}_A)^{-1}(\rho - \tau) (\rho_t +x \text{I}_A)^{-1} \nonumber\\
& = \int_0^1 dt \int_0^\infty dx\, \tr\, \rho_t (\rho_t +x \text{I}_A)^{-2}(\rho - \tau) \nonumber\\ 
& = \int_0^1 dt \Big[-\tr\,\rho_t (\rho_t + x \text{I}_A)^{-1}(\rho-\tau)\Big]_{x=0}^{x=\infty}\nonumber\\ 
& = \int_0^1 dt \ \tr\,\rho_t\rho_t^{-1}(\rho-\tau)  = 0\,,\nonumber
\end{align}
since  $\tr\,\rho = \tr\,\tau =1$, it follows that 
\begin{equation}\label{Sintrep1}
D(\rho\,\|\,\tau) = \int_0^1 dt\,(1-t) \int_0^\infty dx\, \tr \big[(\rho-\tau) (\rho_t +x \text{I}_A)^{-1} \big]^2\,.
\end{equation}
Making use of the quadratic form $K_\phi$ defined in \eqref{quadr}, the integral representation
\eqref{Sintrep1} can be written as 
\begin{equation}\label{Sintrep2}
D \left( \rho \middle\| \tau \right) = \int_0^1dt\,(1-t) K_{\rho_t}(\rho-\tau)\,.
\end{equation}
We further note that $K_\phi$ is monotonically decreasing in $\phi$ in the sense that 
\begin{equation}\label{Kmon}
K_\phi(\eta) \geq K_\psi(\eta) \quad\mbox{if $\phi\leq \psi$}\,,
\end{equation}
for arbitrary positive definite operators $\phi,\psi$ and any self-adjoint operator $\eta$ in $\cB(A)$. 
The proof of this property makes use of the well known fact that $\tau^{-1}$ is a decreasing operator function of $\tau> 0$ and that $\eta\tau\eta\geq 0$, whenever $\tau\geq 0$ and $\eta$ is self-adjoint.
Using positivity and cyclicity of the trace, the latter property implies, for arbitrary non-negative operators $\tau$ and $\zeta$, that
$$\tr\,\zeta\tau = \tr\,\tau^{\tfrac 12}\zeta\tau^{\tfrac 12}\geq 0\,,$$
and hence $\tr\,\zeta\tau$ is an increasing function of both $\tau\geq 0$ and $\zeta\geq 0$.
Using this fact, we get, for $0\leq\phi\leq \psi$ and $x>0$,
\begin{align*}
& \quad\;\tr \ \eta (\phi+x \text{I}_A)^{-1} \eta (\phi+x \text{I}_A)^{-1} \\ 
 & \geq \tr \ \eta (\psi+x \text{I}_A)^{-1} \eta (\psi+x \text{I}_A)^{-1} \,,
\end{align*}
which together with \eqref{quadr} proves the claimed monotonicity.

We are now ready to prove the relative entropy estimates \ref{item:sUpper} and \ref{item:sLow}, which are reproduced below. 



\hfill
\begin{enumerate}[label=\upshape(\roman*), wide, labelindent=0pt]
\item\label{app:item:sUpper} For arbitrary $\rho,\tau\in \cS(A)$, where $\tau$ is positive definite, it holds that
\begin{equation*}
D \left( \rho \middle\| \tau \right) \, \leq \,  K_\tau(\rho-\tau)\,.
\end{equation*}
\item \label{app:item:sLow} For arbitrary $\rho,\tau\in \cS(A)$, where $\tau$ is positive definite such that $k\tau\geq \rho$ for a constant $k \geq 1$, it holds that
\begin{align*}
D \left( \rho \middle\| \tau \right) \, \geq \, g(k) K_\tau(\rho-\tau)\,
\end{align*}
where $g(k) = \frac{k\ln k -k +1}{(k-1)^2}\,$.
\end{enumerate}

\begin{proof}[Proof of \ref{item:sUpper} and \ref{item:sLow}]
Suppose first that $\rho$ is positive definite. Statement \ref{item:sUpper} then follows by using that $\rho_t \geq (1-t)\tau$ and hence $K_{\rho_t}(\eta)\leq K_{(1-t)\tau}(\eta)$ by \eqref{Kmon} which by use of \eqref{Sintrep2} gives 
\begin{align*}
&D(\rho\,\|\,\tau)   \\ & \leq \int_0^1(1-t) dt\int_0^\infty dx\,\,\tr \big[(\rho-\tau)((1-t)\tau+x \text{I}_A)^{-1}\big]^2 \\
& \quad = K_{\tau}(\rho-\tau)\,,
\end{align*}
where the last equality follows by a change of variable $x=(1-t)y$, whereby the $t$-dependence of the integrand disappears.

Statement \ref{item:sLow} follows similarly by using $\rho_t\leq (kt+1-t)\tau$. A change of variable $x= (kt+1-t)y$ then gives the claimed lower bound with
$$
g(k) = \int_0^1dt\, \frac{1-t}{kt+1-t} = \frac{k\ln k -k +1}{(k-1)^2}\,.
$$

In case $\rho$ has zero eigenvalues, the inequalities follow as both sides are continuous functions of $\rho$. 
\end{proof}

\end{document}